\documentclass[aps, prd, reprint, amsmath, amssymb, superscriptaddress,nofootinbib]{revtex4-2}

\usepackage{amsthm}
\newtheorem{theorem}{Theorem}

\usepackage[utf8]{inputenc}
\usepackage[T1]{fontenc}
\usepackage{newtxtext, newtxmath} 
\usepackage{microtype}

\usepackage{graphicx}
\usepackage{bm}          
\usepackage{booktabs}    
\usepackage[x11names]{xcolor}

\usepackage[colorlinks=true, citecolor=RoyalBlue3, linkcolor=RoyalBlue3, urlcolor=RoyalBlue3]{hyperref}

\begin{document}

\title{Positivity of the effective range for finite range attractive potentials with a repulsive core}

\author{Davide Germani}
\email{davide.germani@uniroma1.it} 
\affiliation{Sapienza University of Rome and INFN, Piazzale Aldo Moro 2, I-00185, Italy}
\affiliation{INFN Sezione di Roma, Piazzale Aldo Moro 2, I-00185 Rome, Italy}

\date{\today}

\begin{abstract}
In the phenomenological study of exotic hadrons, the sign of the effective range, $r_0$, is invoked as a criterion to distinguish between compact multiquark configurations (associated with $r_0 < 0$) and loosely bound hadronic molecules ($r_0 > 0$). Motivated by this, we investigate the fundamental constraints on the sign of the effective range for single-channel local interactions. We rigorously prove that for finite-range potentials, characterized by an inner repulsive core and an outer attractive tail, the effective range remains strictly positive provided that the scattering length is greater than the range of the potential ($a > R$).
\end{abstract}

\maketitle

\section{Introduction}
In quantum mechanics, low-energy scattering provides a powerful and universal framework to investigate the underlying interactions between particles without requiring detailed knowledge of the short-range dynamics. When the de Broglie wavelength of the incident particle is significantly larger than the characteristic range of the interaction potential $R$ (i.e., $kR \ll 1$), the scattering process is completely dominated by the S-wave ($l=0$) contribution. The corresponding scattering amplitude $f_0(k)$ can be expressed in terms of the S-wave phase shift $\delta_0(k)$ as \cite{Landau:1991wop,osti_4661960}
\begin{equation}
    f_0(k)=\frac{1}{k\cot\delta_0(k)-ik}\,.
    \label{eq:f0}
\end{equation}
Provided the potential is exponentially bounded, its details in the low-energy regime are completely encapsulated in two fundamental constants through the Effective Range Expansion (ERE) \cite{Bethe,PhysRev.74.92,Landau:1991wop,osti_4661960}:
\begin{equation}
k \cot \delta_0(k) = -\frac{1}{a} + \frac{1}{2} r_0 k^2 + \mathcal{O}(r^3_0k^4) \, ,
\label{eq:ERE}
\end{equation}
where $a$ is the scattering length, and $r_0$ is the effective range. This expansion is also referred to as the \textit{shape-independent approximation}, as virtually any potential can be suitably tuned in range and depth to reproduce the desired values of $a$ and $r_0$ \cite{Blatt:1952ije}.

The scattering length $a$ dictates the zero-energy cross-section, and its sign is intricately linked to the presence of bound or virtual states near the scattering threshold \cite{Macedo-Lima:2023fzp,osti_4661960,Landau:1991wop}. The effective range $r_0$, on the other hand, provides a measure of the spatial extent over which the potential modifies the free wave function. For standard finite-range attractive potentials, particularly those supporting a shallow bound state, it is a well-established mathematical result that the effective range is strictly positive \cite{Smorod,Landau:1991wop}. This stems from the integral representation of $r_0$, where the true wave function within the interaction region is systematically smaller than its asymptotic free counterpart.

Recently, however, the specific value and, more importantly, the \textit{sign} of the effective range have garnered renewed and intense interest in the field of hadron physics. In the ongoing study of exotic hadrons, the sign of $r_0$ is widely discussed as a physical criterion to distinguish between a compact multiquark nature and a loosely bound hadronic molecule \cite{Esposito:2021vhu,Esposito:2025hlp,Baru:2021ldu,Shen:2024npc,Kinugawa:2023fbf}. A molecular structure (analogous to the deuteron \cite{Weinberg:1965zz}) is associated with a positive effective range. Conversely, a compact state produces a negative effective range.

Motivated by this phenomenological discussion, it becomes important to establish rigorous bounds on what types of fundamental interactions can actually produce a negative $r_0$. If one attempts to model the interaction between constituent hadrons using an effective single-channel local potential, it is crucial to understand whether specific shapes can force an inversion of the sign of $r_0$.

This paper aims to contribute to this discussion by investigating the fundamental constraints on the sign of the effective range. Specifically, we focus on a physically relevant class of local, finite-range potentials characterized by an inner repulsive core with an attractive tail. We rigorously prove that, provided the potential has a scattering length greater than its range of action (i.e., $a > R$), the effective range $r_0$ remains strictly positive. This finding imposes a severe mathematical constraint: adding an arbitrary inner repulsion to a single-channel local potential is fundamentally insufficient to yield a negative effective range. Consequently, this restricts the class of potential models that can be legitimately used to describe compact exotic hadrons without invoking explicit coupled-channel dynamics or energy-dependent interactions \cite{Baru:2021ldu,Esposito:2023mxw}.

This paper is organized as follows. In Section \ref{sec:low_en_par}, we review the definitions and the mathematical derivations of the scattering length and the effective range. In Section \ref{sec:proof}, we state and prove our main theorem concerning the strict positivity of $r_0$ for potentials featuring an inner repulsive core. To illustrate the validity and the physical meaning of this result, we apply it to two specific phenomenological models: a piecewise spherical barrier-well potential and a superposition of repulsive and attractive Yukawa potentials. Finally, our conclusions and the phenomenological implications of our findings are summarized in Section \ref{sec:conc}.

\section{Low energy parameters}
\label{sec:low_en_par}
In this section, we review the definitions and derivations of the scattering length and the effective range.

\subsection{Scattering length}
Let $V(\bm{r})$ be the scattering potential, which we assume to be real, local, finite-ranged, and spherically symmetric, i.e., $V(\bm{r})=V(r)$ and $V(r)=0$ for $r>R$, where $R$ defines the range of the potential. The wave function of the scattering particle satisfies the Schr\"odinger equation
\begin{equation}
    -\frac{\hbar^2}{2m} \nabla^2 \psi(r,\theta)+V(r)\psi(r,\theta)=E\psi(r,\theta),
\end{equation}
where $E=\hbar^2k^2/2m$. Due to spherical symmetry, the eigenfunctions can be labeled by the magnitude of the momentum $k=|\bm{k}|$. Since $[V,L_z]=[V,L^2]=0$, the wave function can be decomposed by separating the angular dependence as
\begin{equation}
    \psi_k(r,\theta)=\Omega_k(r)Y_l^m(\theta,\phi)\,,
\end{equation}
where $\Omega_k(r)$ is the radial wave function and $Y_l^m(\theta,\phi)$ are the spherical harmonics. Introducing the reduced radial wave function $u_k(r)$, defined as $\Omega_k(r)=u_k(r)/r$, we obtain the equation
\begin{equation}
    \left(-\frac{\hbar^2}{2m}\frac{d^2}{dr^2}+\frac{\hbar^2}{2m}\frac{l(l+1)}{r^2}+V(r)\right)u_k(r)=\frac{\hbar^2k^2}{2m}u_k(r)\,.
\end{equation}
Requiring $\Omega_k(0)$ to be finite imposes the boundary condition $u_k(0)=0$.

In this work, we focus on low-energy scattering, corresponding to the regime $kR \ll 1$. In this case, it is well known that the partial waves with $l>0$ are unimportant, and the $l=0$ component is dominant \cite{Landau:1991wop,osti_4661960}. Thus, the equation for $u_k$ becomes
\begin{equation}
    \frac{d^2}{dr^2}u_k(r)+\left(k^2-U(r)\right)u_k(r)=0\,,
\end{equation}
with $U(r)=2mV(r)/\hbar^2$. Outside the interaction region, $U(r)=0$, meaning the wave function satisfies the free equation
\begin{equation}
    \left(\frac{d^2}{dr^2}+k^2\right)u_k(r)=0\,\Rightarrow\, u_k(r)=A\sin(kr+\delta_0(k))\,.
\label{eq:free_sch}
\end{equation}
The appearance of the phase shift $\delta_0(k)$ signals that the particle has interacted with the potential for $r\leq R$, so that its wave function differs from the free one. The phase shift can be determined analytically whenever the internal solution is known, by imposing continuity and differentiability at $r=R$. Imposing the continuity of the logarithmic derivative makes the matching independent of the chosen normalization, yielding
\begin{equation}
    \frac{u^\prime_k(r)}{u_k(r)}\bigg|_{r=R^-}
    =\frac{u^\prime_k(r)}{u_k(r)}\bigg|_{r=R^+}
    =k\cot(kR+\delta_0(k)).
\end{equation}
For $k=0$, the equation for $r>R$ reduces to
\begin{equation}
    u^{\prime\prime}_0(r)=0\,,
\end{equation}
whose solution is, without loss of generality,
\begin{equation}
    u_0(r)=A(r-a)\quad\quad r>R\,,
    \label{eq:sol_k0}
\end{equation}
where $a$ and $A$ are integration constants. The usefulness of this parameterization for the homogeneous solution will become apparent shortly. This solution must match the $k\to0^+$ limit of Eq. \eqref{eq:free_sch}, thus leading to
\begin{equation}
 \lim_{k\to0^+}u_k(r)=u_0(r)
\end{equation}
or
\begin{equation}
     \lim_{k\to0^+}k\cot\delta_0(k)=-\frac{1}{a}.
\end{equation}
We have thus obtained the leading order of the effective range expansion, identifying the parameter $a$ as the \textit{scattering length}. This also allows for a simple geometric interpretation. Returning to Eq. \eqref{eq:sol_k0}, we choose the normalization such that $A=-1/a$ (which consequently fixes the normalization of the interior solution), yielding
\begin{equation}
    u_0(r)=1-\frac{r}{a}.
\end{equation}
This function vanishes at $r=a$. The scattering length can therefore be defined as the intercept (or its extrapolation) of the asymptotic expression of the zero-energy wave function $u_0(r)$ \cite{Blatt:1952ije}.

\begin{figure}[h]
    \centering
    \includegraphics[width=0.9\linewidth]{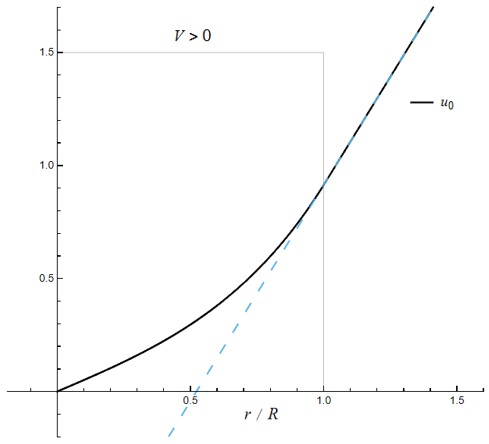}
    \caption{Repulsive potential. The function $u_0$ is shown in black, while the dashed line represents the asymptotic expression. The scattering length is positive, as the line intersects the positive $r-$axis, and lies between $0$ and $R$.}
    \label{fig:rep_pot}
\end{figure}
\begin{figure}[h]
    \centering
    \includegraphics[width=0.9\linewidth]{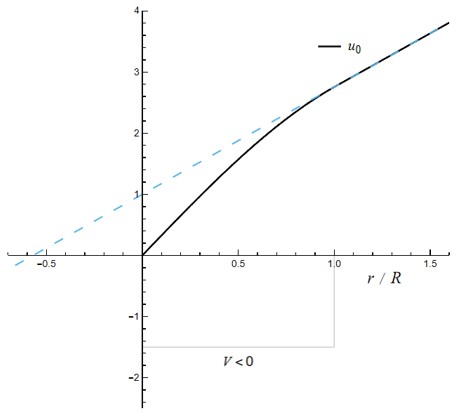}
    \caption{Attractive potential not strong enough to support a bound state. The function $u_0$ is shown in black, while the dashed line represents the asymptotic expression. The scattering length is negative, as the line intersects the negative $r$-axis.}
    \label{fig:att_pot1}
\end{figure}

Repulsive potentials, i.e., $V(r)>0$ for $r<R$, always yield positive scattering lengths. Without loss of generality, we assume $u_0^\prime(0)=1$; \footnote{Since the Schr\"odinger equation is linear, the overall normalization is arbitrary; if $u_k(r)$ is a solution, so is $\alpha u_k(r)$ for any $\alpha \in \mathbb{R}$.} then $u_0(r)$ is convex for $r\in(0,R]$, since
\begin{equation}
    u^{\prime\prime}_0(r)=U(r)\,u_0(r)>0,
\end{equation}
which implies that $u_0(r)$ lies above any of its tangent lines. In particular, it lies above the tangent at $r=R$, which is proportional to $1-r/a$. Hence this line must intersect the horizontal axis at some point in $(0,R)$ (excluding $R$ to avoid singular behavior of $u_0(r)$). 

Attractive potentials, i.e., $V(r)<0$ for $r<R$, on the other hand, may lead to either positive or negative scattering lengths of arbitrarily large magnitude. If the potential is not strong enough to produce a bound state, the zero-energy wave function $u_0(r)$ has no nodes, and its asymptotic linear extrapolation intersects the negative $r$-axis, resulting in $a<0$. If instead a bound state is formed, then $a>0$ and $u_0(r)$ has a node at $r=a$. This reflects the appearance of a discrete bound state with $E<0$, which increases the number of nodes of the continuum wave functions above it. This bound state is accompanied by the appearance of a pole in the scattering amplitude, Eq.~\eqref{eq:f0}, located on the imaginary axis of the complex momentum plane \cite{osti_4661960,Landau:1991wop}.\footnote{We recall that $f_0(k)$ is a meromorphic function in the complex $k$-plane for finite-range potentials \cite{osti_4661960}.} Indeed, retaining only the leading-order term in the expansion of $k\cot\delta_0(k)$, we obtain
\begin{equation}
    f_0(k)=\frac{1}{-\dfrac{1}{a}-ik}\,,
\end{equation}
which exhibits a pole at $\kappa=i/a$, provided that the $\mathcal{O}(k^2)$ terms can be neglected.\footnote{As seen previously, purely repulsive potentials also yield $a>0$. However, in that case $a<R$, meaning that higher-order terms in the ERE \eqref{eq:ERE} cannot be safely neglected.} This condition is met, for example, when $a \gg R$. In this regime, $|\kappa|^{-1} \gg R$, and the next-to-leading order term is suppressed by a factor of $R/a$ (assuming $R$ is the natural length scale of the system). The resulting bound state, with energy
\begin{equation}
    E_B=-B=\frac{\hbar^2\kappa^2}{2m}=-\frac{\hbar^2}{2ma^2}\,,
    \label{eq:B_a}
\end{equation}
where $B$ is the binding energy, is referred to as a \textit{shallow bound state}. Its energy is atypically small compared to the natural energy scales of the system, or equivalently, its spatial extent is much larger than the potential range \cite{Landau:1991wop,osti_4661960}. This is because, outside the interaction region, the bound-state wave function assumes the universal form
\begin{equation}
    u_B(r)\propto e^{-\frac{\sqrt{2mB}}{\hbar}\,r}\,,
\end{equation}
whose spatial extent is much larger than the range of the potential
\begin{equation}
    \frac{\hbar}{\sqrt{2mB}}\gg R,
\end{equation}
which is equivalent to $a\gg R$ using Eq. \eqref{eq:B_a}.
\begin{figure}[b]
    \centering
    \includegraphics[width=0.9\linewidth]{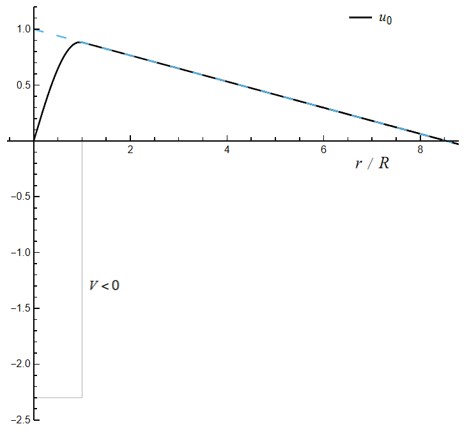}
    \caption{Attractive potential that supports a shallow bound state. The function $u_0$ is shown in black, while the dashed line represents the asymptotic expression. The scattering length is positive and $a \gg R$.}
    \label{fig:att_pot2}
\end{figure}
\begin{figure}[h!]
    \centering
    \includegraphics[width=0.9\linewidth]{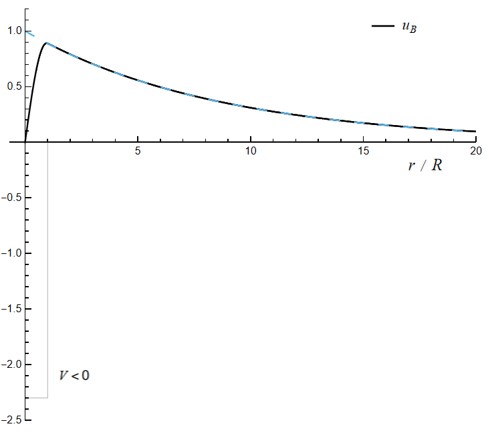}
    \caption{The wave function $u_B$ of the shallow bound state is shown in black. The spatial extent of the wave function is much larger than the range of the potential. The dashed line represents the exponential behavior of the asymptotic expression.}
    \label{fig:b_s}
\end{figure}

\subsection{Effective range}

In this section, we derive the integral representation for the effective range that will be used in the following discussion, following \cite{Bethe,Blatt:1952ije}. Let $u_k(r)$ be the solution of the Schr\"odinger equation
\begin{equation}
    -\frac{d^2}{dr^2}u_k(r)+U(r)\,u_k(r)=k^2u_k(r),
\end{equation}
with $u_k(0)=0$, and let $\chi_k(r)$ be the corresponding asymptotic solution normalized such that $\chi_k(0)=1$
\begin{equation}
    \chi_k(r)=\frac{\sin(kr+\delta_0(k))}{\sin\delta_0(k)}.
\end{equation}
This choice also fixes the normalization of $u_k(r)$, since $u_k(r)=\chi_k(r)$ for $r\geq R$. Clearly $\chi_k(r)$ satisfies the free equation
\begin{equation}
    -\frac{d^2}{dr^2}\chi_k(r)=k^2\,\chi_k(r),
\end{equation}
but, not being physical for $r\leq R$, it is not required to satisfy $\chi_k(0)=0$.

Consider two distinct values $k_1$ and $k_2$ and the respective solutions
\begin{equation}
    -u^{\prime\prime}_{k_1}(r)+U(r)u_{k_1}(r)=k_1^2 u_{k_1}(r),
\end{equation}
\begin{equation}
    -u^{\prime\prime}_{k_2}(r)+U(r)u_{k_2}(r)=k_2^2 u_{k_2}(r).
\end{equation}
Multiplying the first equation by $u_{k_2}$ and the second by $u_{k_1}$, and subtracting, we find
\begin{equation}
    u_{k_1}^{\prime\prime}(r)u_{k_2}(r)-u_{k_2}^{\prime\prime}(r)u_{k_1}(r)
    =(k_2^2-k_1^2)u_{k_1}(r)u_{k_2}(r).
\end{equation}
Integrating from $0$ to $R$ yields
\begin{equation}
    [u^\prime_{k_1}u_{k_2}-u^\prime_{k_2}u_{k_1}]_0^R
    =(k_2^2-k_1^2)\int_0^R u_{k_1}(r)u_{k_2}(r)\,dr.
\label{eq:u_r0}
\end{equation}
Repeating the same steps for $\chi_k(r)$, we obtain
\begin{equation}
    [\chi^\prime_{k_1}\chi_{k_2}-\chi^\prime_{k_2}\chi_{k_1}]_0^R
    =(k_2^2-k_1^2)\int_0^R \chi_{k_1}(r)\chi_{k_2}(r)\,dr.
\label{eq:chi_r0}
\end{equation}
Subtracting Eq. \eqref{eq:u_r0} from Eq. \eqref{eq:chi_r0} and using $u_k^{(\prime)}(R)=\chi_k^{(\prime)}(R)$, we arrive at
\begin{multline}
    k_2\cot\delta_0(k_2)-k_1\cot\delta_0(k_1)
    =\\(k_2^2-k_1^2)\int_0^R \big[\chi_{k_1}(r)\chi_{k_2}(r)-u_{k_1}(r)u_{k_2}(r)\big]\,dr.
\end{multline}
This exact identity is a fundamental equation of scattering theory. Taking the limit $k_1\to0^+$, using
\begin{equation}
   \lim_{k_1\to0^+} k_1\cot\delta_0(k_1)=-\frac{1}{a}\,,
\end{equation}
and relabeling $k_2$ as $k$ for simplicity, we obtain
\begin{equation}
    k\cot\delta_0(k)
    =-\frac{1}{a}
    +k^2\int_0^R \big[\chi_k(r)\chi_0(r)-u_k(r)u_0(r)\big]\,dr.
\end{equation}
Defining
\begin{equation}
    \rho(k)=2\int_0^R \big[\chi_k(r)\chi_0(r)-u_k(r)u_0(r)\big]\,dr,
\end{equation}
we identify the \textit{effective range} as \cite{Bethe,Blatt:1952ije}
\begin{equation}
    r_0=\lim_{k\to0^+}\rho(k)
    =2\int_0^R \big[\chi_0^2(r)-u_0^2(r)\big]\,dr.
\label{eq:r0}
\end{equation}
Since $\chi_0$ and $u_0$ differ only within the interaction region, this integral intuitively quantifies the effective range of action of the potential \cite{osti_4661960,Blatt:1952ije}.

\section{\texorpdfstring{On the sign of $r_0$}{On the sign of r0}}
\label{sec:proof}

It is a well-known result that for purely attractive potentials supporting a shallow bound state, the effective range is positive \cite{Landau:1991wop,Smorod}. Here, we generalize this result to potentials featuring a repulsive core.

\begin{theorem}
Let $V(\bm{r})$ be a real, local, finite-range, spherically symmetric scattering potential, such that $V(\bm{r})=V(r)$ and $V(r)=0$ for $r>R$. Assume there exists $r_1\in(0,R)$ such that $V(r)\geq0$ for $r\in(0,r_1)$, and $V(r)\leq0$ for $r\in(r_1,R)$. Furthermore, assume $a>R$. Then, $r_0>0$.
\end{theorem}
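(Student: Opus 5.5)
The plan is to work directly with the integral representation Eq.~\eqref{eq:r0}, $r_0=2\int_0^R[\chi_0^2-u_0^2]\,dr$, where $\chi_0(r)=1-r/a$. Since $a>R$, $\chi_0$ is strictly positive and decreasing on $[0,R]$, from $\chi_0(0)=1$ to $\chi_0(R)=1-R/a>0$. It therefore suffices to prove the pointwise bound $|u_0(r)|\le\chi_0(r)$ on $[0,R]$, with strict inequality on a set of positive measure. The matching conditions give $u_0(R)=\chi_0(R)>0$ and $u_0'(R)=\chi_0'(R)=-1/a$, so $\chi_0$ is exactly the tangent line to $u_0$ at $r=R$. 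I will also use $u_0''=U u_0$ throughout. The sign of $U$ then converts into convexity or concavity of $u_0$ wherever the sign of $u_0$ is known, exactly as in the argument given above for repulsive potentials.

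\textbf{Step 1 (core).} On $(0,r_1)$ we have $U\ge0$, so $u_0''$ has the sign of $u_0$. Starting from $u_0(0)=0$, $|u_0|$ is therefore nondecreasing, and $u_0$ has no zero in $(0,r_1]$. All nodes of $u_0$ in $(0,R)$ thus lie in the attractive region $(r_1,R)$.

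\textbf{Step 2 (outermost lobe).} Let $s$ be the largest zero of $u_0$ in $[0,R)$. Here $s=0$ if there is no interior node, and otherwise $s\in(r_1,R)$. On $(s,R]$ we have $u_0>0$.
\begin{itemize}
\item On $(\max(s,r_1),R)$, $u_0''=Uu_0\le0$. So $u_0$ is concave there and lies below its tangent at $R$, giving $0<u_0\le\chi_0$.
\item If $s=0$, the core interval $(0,r_1)$ also belongs to this lobe. There $u_0>0$ and $u_0''\ge0$, so $u_0-\chi_0$ is convex on $[0,r_1]$. It is $\le0$ at both endpoints, since $u_0(0)-\chi_0(0)=-1$ and $u_0(r_1)\le\chi_0(r_1)$ by the concave part. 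Hence $u_0-\chi_0\le0$ on all of $[0,r_1]$.
\end{itemize}
Combining the two, $0\le u_0\le\chi_0$ on $[0,R]$, strict near $r=0$. This gives $r_0>0$ whenever $u_0$ has no interior node, which is the single-shallow-bound-state situation.

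\textbf{Step 3 (inner lobes; main obstacle).} The stated hypotheses do not exclude nodes at $s\in(r_1,R)$; deeper bound states are allowed. In that case I must show that on $[0,s]$ the inner lobes satisfy $|u_0|\le\chi_0$. Since $\chi_0\ge\chi_0(R)=u_0(R)$ on $[0,R]$, it is enough to prove $|u_0(r)|\le u_0(R)$ for $r<s$.
\begin{itemize}
\item \emph{Slope at the node.} On the outer lobe, $u_0$ is concave and vanishes at $s$, so $u_0'(s)\le u_0(R)/(R-s)$. Concavity also gives $u_0'(s)\ge u_0'(R)$, which bounds the slope entering the outer lobe from both sides.
\item \emph{Propagation inward (first attempt).} I would bound the amplitude of each inner lobe in terms of the slope at its outer node. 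On a lobe $(t,s)$ inside the attractive region, $|u_0|$ is concave, so its maximum is controlled by $|u_0'(s)|$ times the lobe length. Inner lobes are shorter than the outer one, which lies in $(s,R)$, with $R-s<R<a$.
\item \emph{Core lobe.} The lobe reaching into the core is handled by the convexity argument of Step~2 applied to $|u_0|$.
\item \emph{Fallback.} If this concavity bookkeeping does not close, I would try a Pr\"ufer-angle or energy-type functional, such as $u_0'^2-Uu_0^2$, on $(r_1,R)$ to compare successive lobe amplitudes.
\end{itemize}
This amplitude comparison across nodes is the step I expect to be hardest. Once $|u_0|\le\chi_0$ holds on all of $[0,R]$, strictness near $r=0$ (where $u_0\to0$ and $\chi_0\to1$) gives $r_0>0$.
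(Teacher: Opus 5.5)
\textbf{Steps 1--2 match the paper.} Your Steps 1--2 are correct and coincide with the paper's proof. The paper sets $\Delta=\chi_0-u_0$ and uses convexity of $\Delta$ on $(r_1,R)$ with $\Delta(R)=\Delta'(R)=0$, then concavity on $(0,r_1)$ with $\Delta(0)=1$. Since $\chi_0$ is linear, this is exactly your tangent-line and chord argument. The paper never reaches your Step 3, because it simply asserts that $u_0$ has a single node (at $r=a$), i.e.\ it tacitly assumes $u_0\neq0$ on $(0,R)$.

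\textbf{Step 3 is a gap that cannot be closed.} You are right that the stated hypotheses do not guarantee a single node. But once $u_0$ has interior nodes, both $|u_0|\le\chi_0$ and $r_0>0$ can fail. Take $U=-K_1^2$ on $(0,R/2)$ and $U=-K_2^2$ on $(R/2,R)$, with $K_1=2\pi/R$ and $K_2=5\pi/R$. Then $u_0=\cos\bigl(K_2(R-r)\bigr)$ on $[R/2,R]$, so $1/a=0$. This vanishes at $R/2$ with slope $K_2$, and matching gives $u_0=-(K_2/K_1)\sin(K_1r)$ on $[0,R/2]$. Hence
\[
r_0=2\Bigl[R-\frac{25}{4}\cdot\frac{R}{4}-\frac{R}{4}\Bigr]=-\frac{13}{8}R<0 .
\]
A slight change of $K_2$ makes $a$ finite with $a>R$. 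Setting $U=0$ (or a small positive barrier) on a short interval $(0,\epsilon)$ supplies the core. By continuity $r_0$ stays negative, so every hypothesis holds while the conclusion fails.

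\textbf{Where your Step 3 heuristics break.} The example shows each failure concretely.
\begin{itemize}
\item Concavity on the outer lobe $(s,R)$ gives $u_0(R)\le u_0'(s)(R-s)$. This is a \emph{lower} bound on $u_0'(s)$, not an upper one. Here $s=0.9R$ and $u_0'(s)=5\pi/R>10/R=u_0(R)/(R-s)$.
\item Inner lobes need be neither shorter nor lower. The local amplitude $\sqrt{u_0^2+u_0'^2/K^2}$ jumps by up to a factor $K_2/K_1$ where the well becomes shallower going inward. Here it goes from $1$ to $5/2$, on a lobe of length $R/2$ versus the outer lobe's $R/10$.
\end{itemize}
Since the target inequality is false, no Pr\"ufer-angle or energy functional can rescue it.

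\textbf{The repair.} Add the hypothesis that $u_0$ has no zero in $(0,R)$. By oscillation counting this means the potential has exactly one bound state, the one accounted for by the exterior node at $r=a$. Under that hypothesis your Steps 1--2 alone are a complete proof, identical to the paper's.
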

\begin{proof}
Define $\Delta(r)=\chi_0(r)-u_0(r)$. This function is $C^1(\mathbb{R})$, vanishes identically for $r\geq R$, and satisfies $\Delta(0)=1$. Moreover,
\begin{equation}
    \Delta^{\prime\prime}(r)=-u_0^{\prime\prime}(r)=-U(r)\,u_0(r).
\end{equation}
The function $u_0(r)$ has a single node. Since for $r\geq R$ we have $u_0(r)=\chi_0(r)=1-r/a$, and $\chi_0(r)$ has a node at $r=a> R$, it follows that $u_0(r)>0$ for $r\leq R$. Therefore,
\begin{equation}
    \Delta^{\prime\prime}(r)=\begin{cases}
        \leq0 & r\in(0,r_1)\\
        \geq0 & r\in(r_1,R)\,.
    \end{cases}
\end{equation}
Hence $\Delta(r)$ is convex on $(r_1,R)$, and since $\Delta(R)=\Delta^\prime(R)=0$, we conclude that $\Delta(r)\geq0$ in $[r_1,R]$. On $(0,r_1)$, $\Delta(r)$ is concave, with both $\Delta(0)$ and $\Delta(r_1)$ positive; therefore $\Delta(r)\geq0$ also on $[0,r_1]$. Altogether,
\begin{equation}
    \Delta(r)\geq0 
    \,\Rightarrow\,
    \chi_0(r)\geq u_0(r)>0 \quad r\in[0,R].
\end{equation}
Consequently, $\chi_0^2(r)\geq u_0^2(r)$ over $[0,R]$. Since $\chi_0(0)>u_0(0)$ and both are continuous functions, it follows immediately from Eq. \eqref{eq:r0} that $r_0>0$.
\end{proof}

\subsection{Examples}
Below we present two examples of potentials to which the theorem applies. The simplest case is a piecewise continuous potential consisting of a spherically symmetric finite barrier followed by a potential well. This potential can also be solved analytically in a specific limit, as we will show. As a second example, we consider the sum of an attractive and a repulsive Yukawa potential, multiplied by a Heaviside step function to guarantee a finite range.

\subsubsection{Spherical barrier and well}
We consider the potential
\begin{equation}
    V_w(r)=\begin{cases}
        V_1 & r\in[0,r_1),\\
        -V_2 & r\in [r_1,R),\\
        0 & r>R,
    \end{cases}
    \label{eq:vr_well}
\end{equation}
with $V_i>0$. For generality, we rewrite the problem in terms of the dimensionless variable
\begin{equation}
    \bar{r}=\frac{r}{R}\,\,.
\end{equation}
Thus
\begin{equation}
    \bar{U}_w(\bar{r})=\frac{2m}{\hbar^2}R^2V_w(\bar{r})
\end{equation}
and
\begin{equation}
    \bar{u}_k(\bar{r})=\sqrt{R}\,u_k(\bar{r})\,\,,
\end{equation}
so that the equation to solve for $a=R\,\bar{a}$ and $r_0=R\,\bar{r}_0$ becomes\footnote{This is equivalent to setting $\hbar=1$ and $m=1/2$.}
\begin{equation}
    -\bar{u}_0^{\prime\prime}(\bar{r})+\bar{U}(\bar{r})\bar{u}_0(\bar{r})=0\,.
\label{eq:well}
\end{equation}
To solve this differential equation numerically, we implemented the method proposed in \cite{Macedo-Lima:2023fzp} using \texttt{Wolfram Mathematica}. Using, for example, the parameters in Tab. \ref{tab:par_well}, we obtain $\bar{a}=1.17$ and $\bar{r}_0=0.33$. It is instructive to compare this result with the case of a purely attractive spherical well $V^*$ of the same range $R$. By selecting the value reported in Tab. \ref{tab:par_well}, we find $\bar{a}^*=\bar{a}$ but $\bar{r}^*_0=0.32$. The inclusion of a repulsive core leads to an effective range larger than that of a purely attractive potential with the same range and scattering length. This is consistent with the behavior shown in Fig. \ref{fig:u0_well}, where the wave function associated with $V_w$ consistently lies below the corresponding function for $V^*_w$. This suppression is due to the classically forbidden region $r<r_1$.

\begin{table}[h]
    \centering
    \begin{tabular}{|cccc|ccc|}
    \hline
        $\bar{r}_1$ & $\bar{U}_1$ & $\bar{U}_2$ & $\bar{U}^*$ & $\bar{a}$ & $\bar{r}_0$ & $\bar{r}_0^*$ \\
        \hline\hline
         0.2 & 15 & 8 & 7.3667 & 1.17 & 0.33 & 0.32 \\
    \hline
    \end{tabular}
    \caption{Parameters used for the spherical barrier and well problem, and the corresponding scattering length and effective range.}
    \label{tab:par_well}
\end{table}

\begin{figure}[h]
    \centering
    \includegraphics[width=0.9\linewidth]{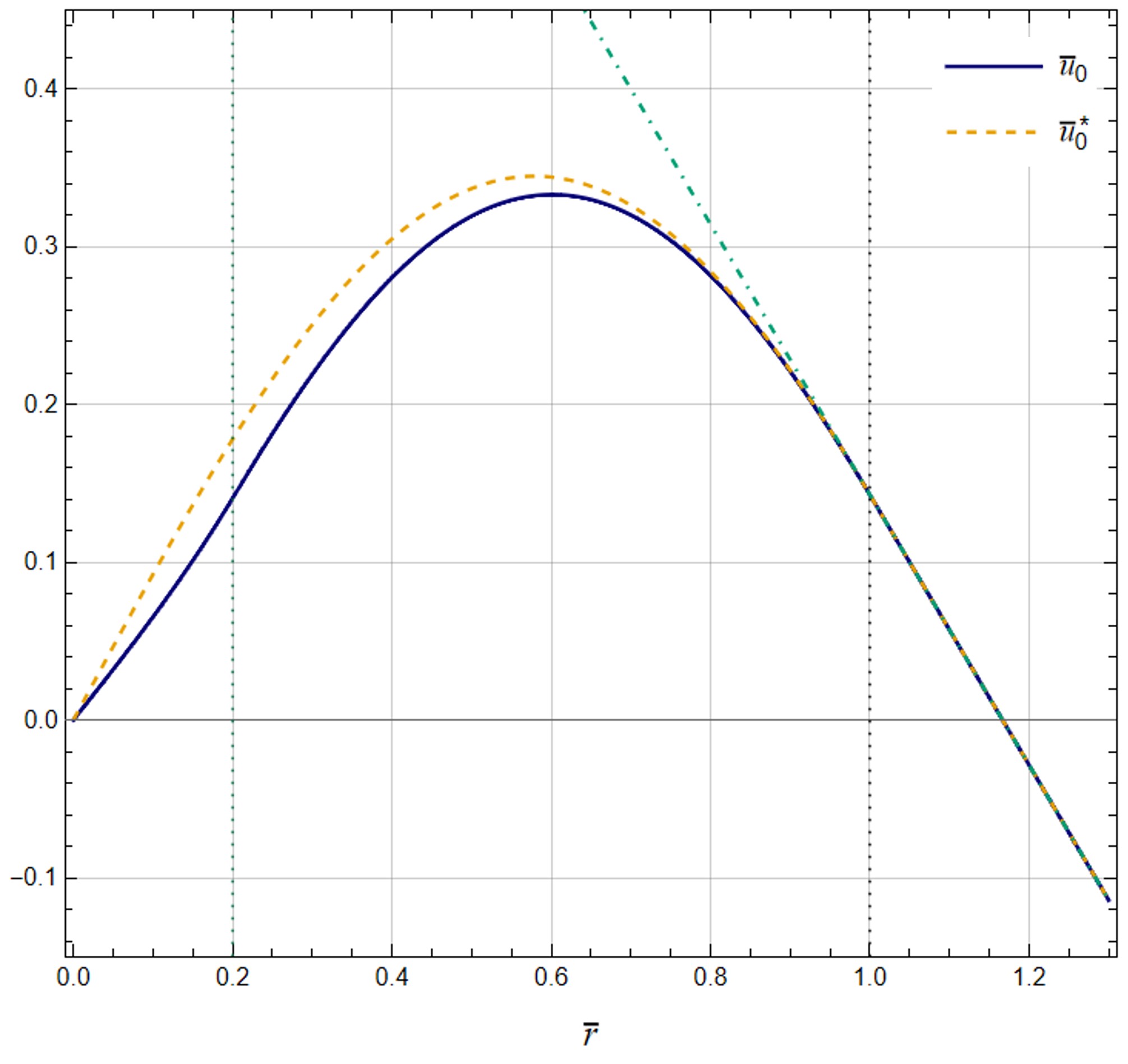}
    \caption{Solid line: zero-energy wavefunction for the potential in Eq.~\eqref{eq:vr_well} with the parameters listed in Tab.~\ref{tab:par_well}. The dashed line indicates the solution for the purely attractive problem. The dash-dotted line represents the asymptotic solution $1-\bar{r}/\bar{a}$, which is common to both solutions since they share the same scattering length. The vertical dotted lines indicate the end of the repulsive region ($\bar{r}=\bar{r}_1$) and the range of the potential ($\bar{r}=1$), respectively.}
    \label{fig:u0_well}
\end{figure}
\begin{figure}[h]
    \centering
    \includegraphics[width=0.9\linewidth]{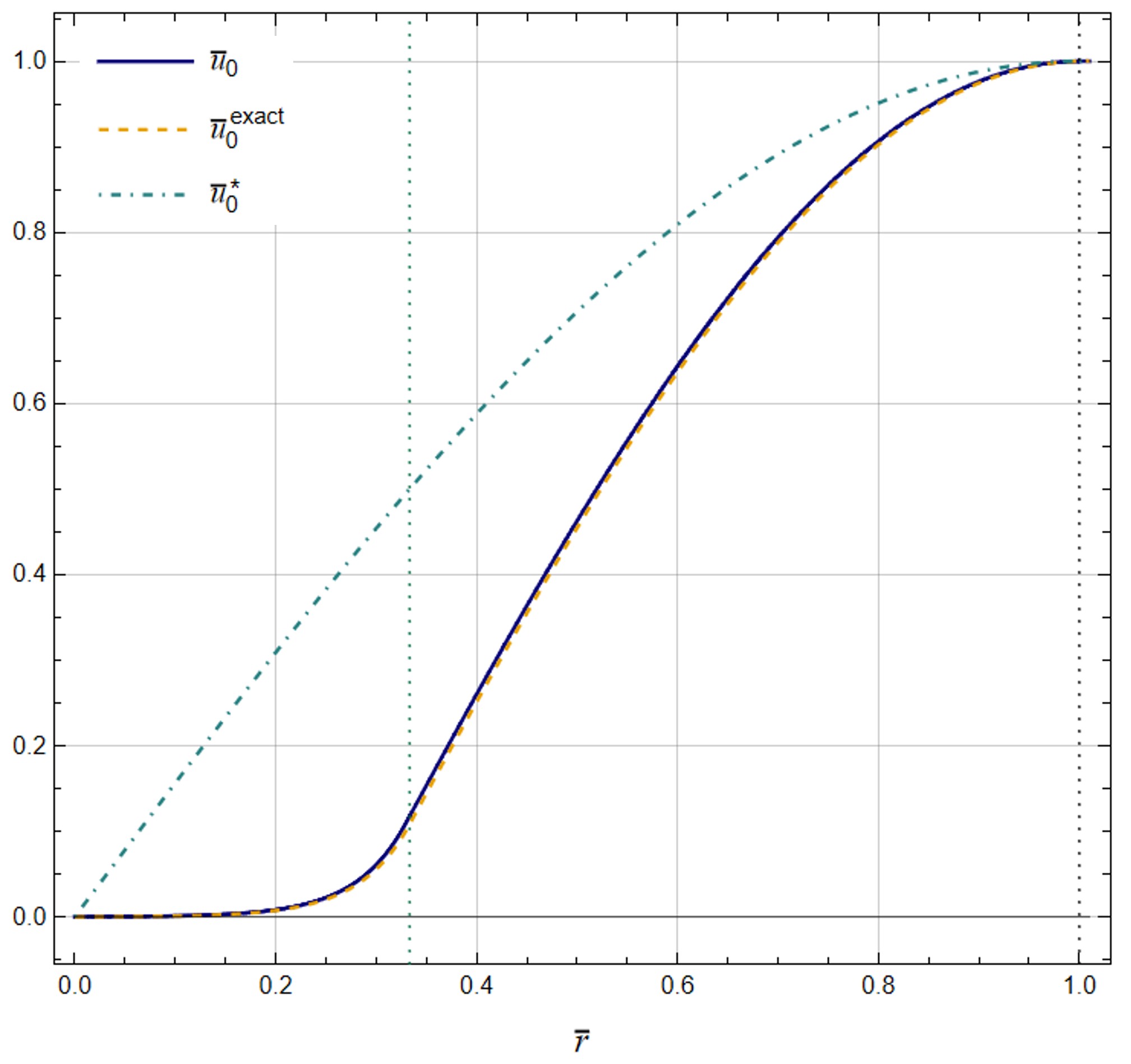}
    \caption{Comparison between the analytical solution in Eq.~\eqref{eq:u0_asymp} (solid line), the numerical solution (dashed line, obtained with \texttt{Mathematica} following \cite{Macedo-Lima:2023fzp}), and the solution $\bar{u}_0^*(\bar{r})=\sin(\pi/2\,\bar{r})$. The vertical dotted lines indicate the end of the repulsive region ($\bar{r}=\bar{r}_1$) and the range of the potential ($\bar{r}=1$), respectively.}
    \label{fig:u0_as}
\end{figure}

This potential is also of interest because it is analytically solvable in the limits $R/a\to 0$ (unitarity limit), $U_1\gg U_2$, and $\sqrt{U_1}r_1\gg1$. This regime provides a useful approximation for more complex interactions with a strong repulsive core. The solution of Eq. \eqref{eq:well} that respects $\bar{u}_0(0)=0$ is
\begin{equation}
    \bar{u}_0(\bar{r})=\begin{cases}
        A \sinh(k_1\,\bar{r}) & \bar{r}\in[0,\bar{r}_1)\\[4pt]
        B\sin(k_2\,\bar{r}+\delta) & \bar{r}\in [\bar{r}_1,1)\\[4pt]
        1 & \bar{r}\geq 1,
    \end{cases}
\end{equation}
where $k_1=\sqrt{\bar{U}_1}$ and $k_2=\sqrt{\bar{U}_2}$. Continuity of the first derivative at $\bar{r}=1$ gives
\begin{equation}
    Bk_2\cos(k_2+\delta)=0\,
\end{equation}
so
\begin{equation}
    \delta=\frac{\pi}{2}-k_2 \Rightarrow \bar{u}_0(\bar{r})=B\cos\big(k_2(1-\bar{r})\big)\quad \bar{r}\in[\bar{r}_1,1)\,.
\end{equation}
Continuity of the function at $\bar{r}=1$ then yields $B=1$.

At $\bar{r}=\bar{r}_1$, continuity of the function implies
\begin{equation}
    A\sinh(k_1\bar{r}_1)=\cos\big(k_2 (1-\bar{r}_1)\big)
    \,\Rightarrow \,
    A=\frac{\cos\big(k_2 (1-\bar{r}_1)\big)}{\sinh(k_1\bar{r}_1)}.
\label{eq:A}
\end{equation}
Continuity of the logarithmic derivative gives
\begin{equation}
    k_1 \coth(k_1 \bar{r}_1)=k_2 \tan\big(k_2 (1-\bar{r}_1)\big)\,.
\label{eq:log_cont}
\end{equation}
This equation provides the relation between $V_1$ and $V_2$ required to achieve an infinite scattering length. For instance, we can express $V_2$ as a function of $V_1$ by solving Eq. \eqref{eq:log_cont} under the aforementioned assumptions, yielding
\begin{equation}
    k_2=\frac{\pi}{2(1-\bar{r}_1)}
    \left(1-\frac{1}{(1-\bar{r}_1)k_1}\right).
\end{equation}
The leading term coincides with that of a square well of width $1-\bar{r}_1$. Using
\begin{equation}
    \cos\arctan(x)=\frac{1}{\sqrt{1+x^2}},
\end{equation}
together with Eq. \eqref{eq:log_cont}, we obtain
\begin{equation}
    A
    =\frac{1}{\sinh(k_1 \bar{r}_1)}
    \frac{\pi}{2(1-\bar{r}_1)k_1}.
\end{equation}
The complete zero-energy wave function is therefore
\begin{equation}
    \bar{u}_0(\bar{r})=\begin{cases}
        \dfrac{1}{\sinh(k_1 \bar{r}_1)}
        \dfrac{\pi}{2(1-\bar{r}_1)k_1}
        \sinh(k_1\bar{r}) & \bar{r}\in[0,\bar{r}_1)\\[12pt]
        \cos\big(k_2 (1-\bar{r})\big) & \bar{r}\in[\bar{r}_1,1)\\[4pt]
        1 & \bar{r}\geq 1.
    \end{cases}
    \label{eq:u0_asymp}
\end{equation}
We may now compute
\begin{equation}
    \bar{r}_0=2\int_0^1 \big(1-\bar{u}_0^2(\bar{r})\big)\,d\bar{r}\simeq 1+\bar{r}_1\,.
\end{equation}
In Fig. \ref{fig:u0_as}, we compare the wave function \eqref{eq:u0_asymp} for $\bar{U}_1=400$ and $\bar{r}_1=1/3$ with that of a purely attractive spherical well of width $R$ in the unitarity limit ($\bar{U}^*_w=\pi^2/4$), which has $\bar{u}_0^*(\bar{r})=\sin(\pi/2\,\bar{r})$ and $\bar{r}_0=1$. In this case, it is evident how the wave function is exponentially suppressed in the classically forbidden region, where the $\sinh$ behavior replaces the $\sin$ one. These findings are consistent with the limit of an infinite barrier, which forces $\bar{u}_0=0$ for $\bar{r}<\bar{r}_1$, resulting in $\bar{r}_0=1+\bar{r}_1$.

\subsubsection{Yukawa potential}
\begin{figure}[h]
    \centering
    \includegraphics[width=0.9\linewidth]{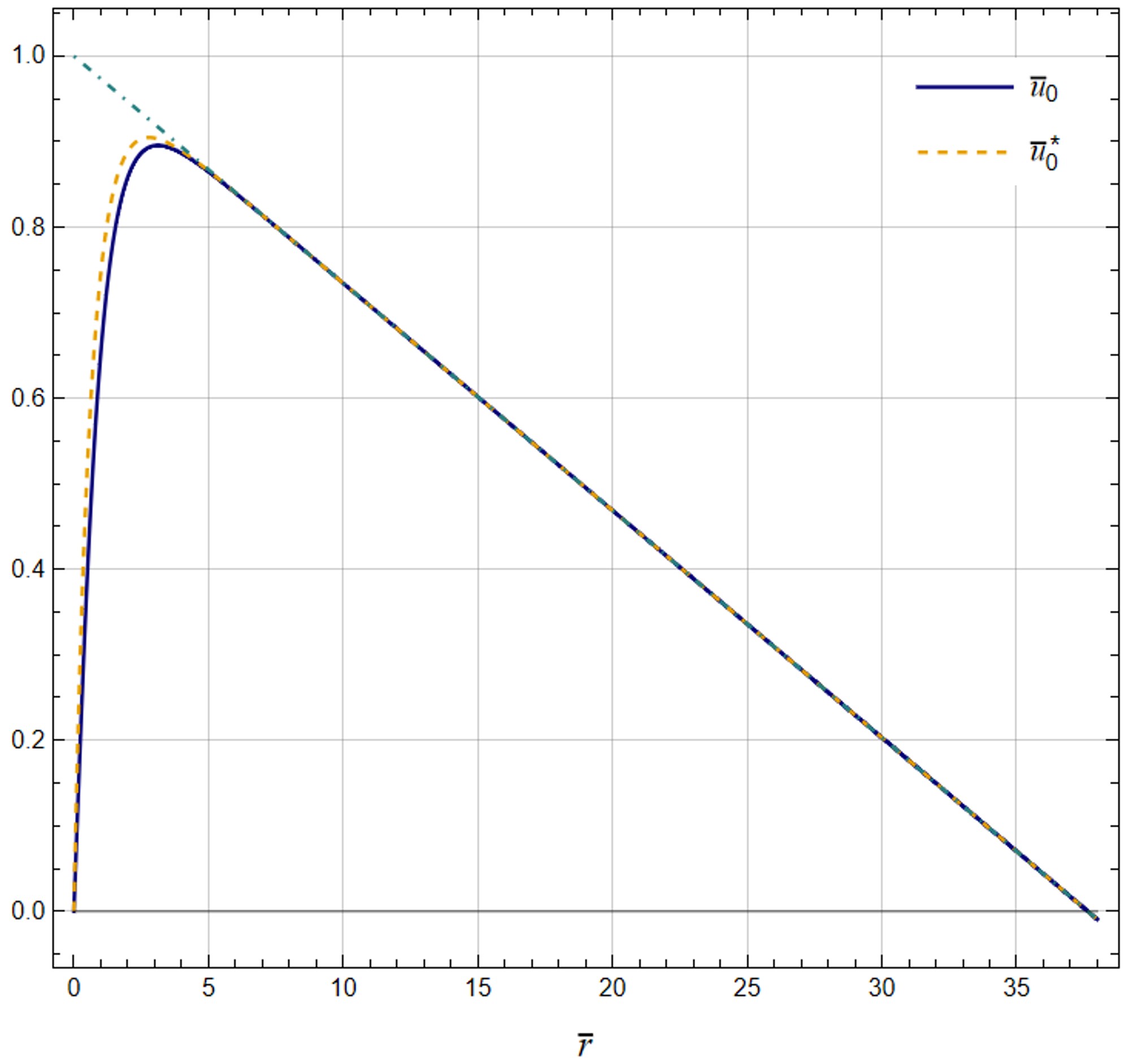}
    \caption{Solid line: zero-energy wavefunction for the potential in Eq.~\eqref{eq:yukawa_pot} with the parameters listed in Tab.~\ref{tab:yukawa}. The dashed line indicates the solution for the purely attractive problem. The dash-dotted line represents the asymptotic solution $1-\bar{r}/\bar{a}$, which is common to both solutions since they share the same scattering length.}
    \label{fig:u0_yuk}
\end{figure}
\begin{figure}[h]
    \centering
    \includegraphics[width=0.9\linewidth]{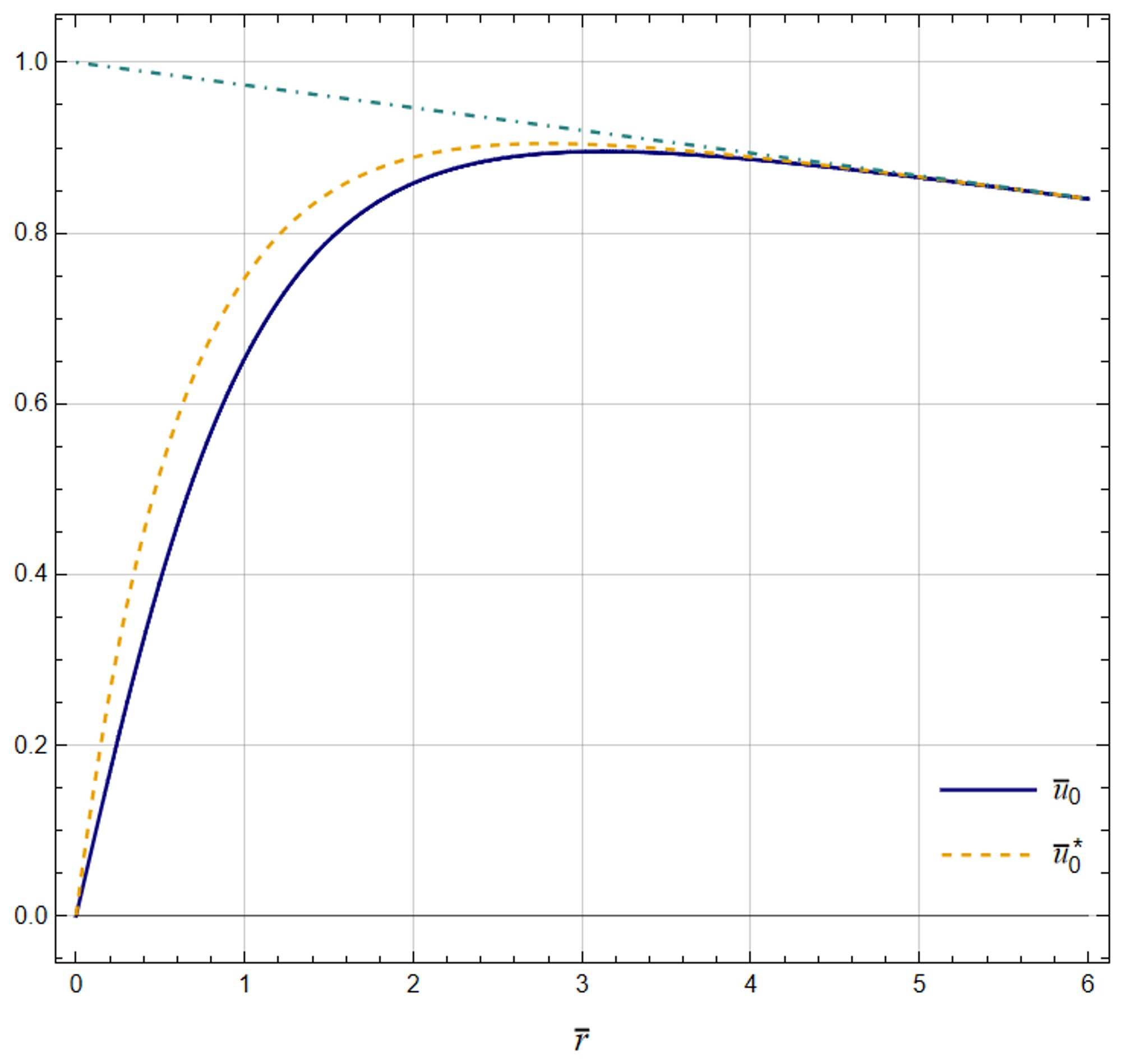}
    \caption{Zoom in of the two wave functions to highlight how $\bar{u}_0$ lies below $\bar{u}_0^*$. Furthermore, it illustrates the rapid convergence to the asymptotic solution due to the steep exponential decay of the potentials.}
    \label{fig:u0_yuk_zoom}
\end{figure}

Although the Yukawa potential does not strictly possess a finite range, its exponential decay allows us to truncate it at an arbitrarily large $R$ and verify that the results are largely independent of this cutoff.\footnote{Restricting the potential to a compact support also prevents non-analyticity issues in the complex $k$ plane \cite{osti_4661960}.} We therefore consider the potential
\begin{equation}
    V_y(r)=\frac{\hbar^2}{2mr_2}\left(g_1^2\,\frac{e^{-r/r_1}}{r}-g_2^2\,\frac{e^{-r/r_2}}{r}\right)\theta(R-r)\,,
    \label{eq:yukawa_pot}
\end{equation}
where $\theta(x)$ is the Heaviside step function and $r_1<r_2$, ensuring the potential has an attractive tail. We also choose $R \gg r_2$ to be well within the exponentially decaying region. As $r\to0^+$, the potential behaves as
\begin{equation}
    V(r)\sim \frac{g_1^2-g_2^2}{r} \,,
\end{equation}
so that if $|g_1|>|g_2|$, it features a repulsive core.
\begin{table}[ht]
    \centering
    \begin{tabular}{|ccccc|ccc|}
    \hline
         $g_1^2$& $g_2^2$ & $\bar{r}_1$ & $\bar{R}$ & $g^{*2}$ & $\bar{a}$ & $\bar{r}_0$ & $\bar{r}_0^*$\\
         \hline\hline
         3 & 2.7 & 0.4 & 30 & 1.7416 & 37.6 & 1.23 & 1.01 \\
         \hline
    \end{tabular}
    \caption{Parameters used for the Yukawa problem and the corresponding scattering length and effective range.}
    \label{tab:yukawa}
\end{table}
In this case, it is more convenient to use $r_2$ as the natural length scale, defining the dimensionless variable
\begin{equation}
    \bar{r}=\frac{r}{r_2} \,,
\end{equation}
and similarly for all other dimensional quantities. Using the parameters in Tab. \ref{tab:yukawa}, we obtain $\bar{a}=37.6$ and $\bar{r}_0=1.23$. Notably, varying the cutoff $\bar{R}$ by a factor of two does not significantly alter these results.\footnote{For instance, setting $\bar{R}=6$ changes $\bar{a}$ by only $2\%$. This indicates that, due to the exponential decay, the wave function quickly converges to the free solution.} As before, we can compare this outcome with a purely attractive Yukawa potential $V_y^*(r)\propto -g^{*\,2}\,e^{-r/r_2}/r$ yielding the same scattering length. Setting $g^*$ as shown in Tab. \ref{tab:yukawa}, we find $\bar{a}^*=\bar{a}$ but $\bar{r}_0^*=1.01$, corroborating the general trend of an increased effective range observed in the previous section.

\section{Conclusions}
\label{sec:conc}
In this work, we investigated the fundamental constraints on the sign of the effective range $r_0$ in low-energy scattering. While it is a standard result that purely attractive potentials supporting a shallow bound state yield a positive effective range \cite{Landau:1991wop,Smorod}, we extended this analysis to a more general and physically motivated class of interactions. Specifically, we proved that for local, finite-range potentials characterized by an inner repulsive core and an attractive tail, the effective range remains strictly positive as long as the scattering length is larger than the potential range ($a > R$).

We illustrated this theorem using two representative models: a piecewise continuous spherical barrier followed by a well potential, and a truncated superposition of attractive and repulsive Yukawa potentials. In both instances, the presence of the repulsive core suppresses the physical wave function in the inner, classically forbidden region. This suppression consistently increases the value of the effective range compared to purely attractive equivalent potentials, confirming that local repulsion cannot drive $r_0$ to negative values.

These findings carry significant implications for modern hadron physics, particularly in the phenomenological study of exotic states. Our results demonstrate that introducing an arbitrary repulsion into a single-channel local potential is fundamentally insufficient to generate the negative effective range typically associated with compact multiquark configurations without invoking explicit coupled-channel dynamics or energy-dependent interactions \cite{Baru:2021ldu,Esposito:2023mxw,Esposito:2025hlp,Esposito:2021vhu}.

\bibliographystyle{unsrt}
\bibliography{biblio}

\end{document}